\begin{document}
\title{Discrete Effort Distribution via \\Regret-enabled Greedy Algorithm 
\thanks{This research is supported by Department of Science and Technology of Guangdong Province (Project No. 2021QN02X239) and Shenzhen Science and Technology Program (Grant No. 202206193000001, 20220817175048002).}}
%
%
\author{Song Cao\orcidID{0009-0002-1760-3820},
Taikun Zhu\orcidID{0000-0001-7365-9576}, \\ and 
Kai Jin\orcidID{0000-0003-3720-5117} \thanks{Corresponding author: Kai Jin. \url{cscjjk@gmail.com}.}}
%
%
\institute{Shenzhen Campus of Sun Yat-sen University, Shenzhen, Guangdong, China
\email{caos6@mail2.sysu.edu.cn}\\
\email{zhutk3@mail2.sysu.edu.cn}\\
\email{jink8@mail.sysu.edu.cn}\\
}

\authorrunning{Cao \emph{et al.}}

\maketitle              
\begin{abstract}
This paper addresses resource allocation problem with a separable objective function under a single linear constraint, formulated as maximizing $\sum_{j=1}^{n}R_j(x_j)$ subject to $\sum_{j=1}^{n}x_j=k$ and $x_j\in\{0,\dots,m\}$. While classical dynamic programming approach solves this problem in $O(n^2m^2)$ time, we propose a regret-enabled greedy algorithm that achieves $O(n\log n)$ time when $m=O(1)$. The algorithm significantly outperforms traditional dynamic programming for small $m$. Our algorithm actually solves the problem for all $k~(0\leq k\leq nm)$ in the mentioned time.
 
\keywords{Regret-enabled Greedy Algorithm \and Discrete Effort Distribution \and Resource Allocation \and (max,+) convolution \and Heaps}
\end{abstract}

\section{Introduction}
We consider the effort distribution problem with separable objective function and one linear constraint. It can be formulated as follows.
\begin{align*}
  &\text{Maximize: }\sum_{j=1}^{n}R_j(x_j),\\
  &\text{subject to: }\sum_{j=1}^{n}x_j=k,\indent x_j\in\{0,...,m\}, m>0 \text{ and }1\leq k\leq nm,
\end{align*}
where,

$n$ = the number of projects,

$k$ = the total number of efforts,

$m$ = the maximal number of efforts allowed to be allocated to each project,

$x_j$ = the number of efforts allocated to $j$-th project, and

$R_j(x_j)$ = the revenue $j$-th project generates when it is allocated $x_j$ efforts.

Assume that $R_j(0)=0$.

It can be solved by dynamic programming in $O(n^2m^2)$ time. Let $dp[j,k]$ be maximal revenue for the first $j$ projects with $k$ efforts. Then we 
have
$$dp[j,k]=\max_{0\leq x_j\leq m} \left\{ dp[j-1,k-x_j]+R_j(x_j) \right\}.$$

While prior works imposes concavity or near concavity constraints on $R_j$, our approach removes these constraints, requiring only separability of $R_j$. In this paper, we give an $O(n\log n)$ time algorithm based on a regret-enabled greedy framework, which solves the problem for all $k$ ($0\leq k\leq nm$) when $m=O(1)$. 

Traditional greedy algorithms iteratively make locally optimal decisions to achieve global optimal solution, but they fail in our context. To overcome this we use a regrettable greedy mechanism---a paradigm that allows strategic revocation of prior decisions. Specifically, our algorithm adjusts allocations by (1) removing $t$ (a parameter dependent on $m$) efforts from some projects, and (2) allocating $t+1$ efforts to some projects to maximize incremental revenue at each step.
\medskip

The organization of this paper is as follows. 
In Section~\ref{sect:greedy-for-m}, we establish a crucial property of optimal solutions, and present our main algorithm in Section~\ref{sect:mainalgorithm}. Furthermore, for the special case where all revenue functions $R_j$ are convex, we introduce two algorithms in Section~\ref{sect:convex}: one computes optimal solutions for all $k$ in $O(nm+n\log n)$ time, while the other runs in $O(nm)$ time for a given $k$.

 In Section~\ref{sect:faster}, we define a class of functions called \emph{oscillating concave functions} and demonstrate a computational property: if $f$ is concave and $g$ is oscillating concave, their (max,+) convolution can be computed in $O(n)$ time. Based on this property, we describe an algorithm for $m=2$ that achieves $O(n)$ time after an initial sorting step.

\newcommand{\x}{\mathbf{x}}
\newcommand{\y}{\mathbf{y}}
\newcommand{\calP}{\mathcal{P}}

\begin{definition}
A distribution of $k$ efforts to the $n$ projects can be described by a vector
  $\x = (x_1,\ldots,x_n)$ where $x_i \in \{0,\ldots,m\}$ and $\sum_i x_i=k$.
Such a vector is called a \emph{$k$-profile}.
For $k\geq 0$, denote by $\calP_k$ the set of $k$-profiles.

A $k$-profile is optimal if its revenue $\sum_i R_i(x_i)$ is the largest
  among $\calP_k$.
\end{definition}

\subsection{Related works}\label{subsect:related}

The problem we studied falls under the broad category of resource allocation. Resource allocation problem involves determining the cost-optimal distribution of constrained resources among competing activities under fixed resource availability. The multi-objective resource allocation problem (MORAP) has been formally characterized through network flow modeling by Osman \emph{et al.} \cite{Osman}, establishing a generalized framework for handling different optimization criteria under resource constraints. 
Beyond that, resource allocation problem widely appears in manufacturing, computing, finance and network communication. 
Bitran and Tirupati \cite{Bitran} formulated two nonlinear resource allocation problems---targeting problem (TP) and balancing problem (BP)---for multi-product manufacturing systems. Bitran and Saarkar \cite{Bitran2} later proposed an exact iterative algorithm for TP. Rajkumar \emph{et al.} \cite{Rajkumar} presented an analytical model to measure quality of service (QoS) management, which referred to as QoS-based Resource Allocation Model (Q-RAM). Bretthauer \emph{et al.} \cite{Bretthauer2} transferred various versions of stratified random sampling plan problem into resource allocation problems with convex objective and linear constraint. And they provided two branch-and-bound algorithms to solve these problems.

A fundamental variant known as the \emph{simple resource allocation problem} \cite{Katoh} involves minimizing separable convex objective functions (or maximizing separable concave objective functions) with a single linear constraint, solvable via classical greedy algorithms \cite{Fox,Shih}. Subsequent research has extended this framework along two directions: generalizing objective functions and complex constraints. For instance, Federgruen and Groenevelt \cite{Federgruen} developed greedy algorithms for weakly concave objectives, while Murota \cite{Murota} introduced M-convex functions—a specialized subclass of convex functions later studied by Shioura \cite{Shioura} for polynomial-time minimization. Nonlinear constraints were addressed by Bretthauder and Shetty \cite{Bretthauder}, who proposed a branch-and-bound algorithm for separable concave objectives. Multi-objective scenarios were explored by Osman \emph{et al.}\cite{Osman} using genetic algorithms, and online stochastic settings were investigated by Devanur \emph{et al.} \cite{Devanur} through a distributional model yielding an $1-O(\epsilon)$-approximation algorithm. Recent work by Deng \emph{et al.} \cite{Deng} further extended the framework to nonsmooth objectives under weight-balanced digraph constraints via distributed continuous-time methods. In this paper, we focus on generalizing the objective function by removing its concavity constraint.


From a computational perspective, our problem admits the computation of (max,+) convolution. Given two sequences $\{x_i\}_{i=1}^n$ and $\{y_i\}_{i=1}^n$, their (max,+) convolution computes $z_k=\max_{i=0}^k(x_i+y_{k-i})$, with (min,+) convolution defined analogously. Many problems occur to be computation of such convolution, such as the Tree Sparsity problem and Knapsack problem. 
 
While naively computable in $O(n^2)$ time, Cygan \emph{et al.}\cite{Cygan} put forward that there is no $O(n^{2-\epsilon})$ algorithm where $\epsilon>0$ for (min,+) convolution. Subsequent improvements include Bremner \emph{et al.}'s $O(n^2 / \lg n)$ algorithm \cite{Bremner} and Bussieck \emph{et al.}'s $O(n\log n)$ expected-time algorithm for random inputs \cite{Bussieck}. Special sequence structures enable faster computation: 
When $x$ and $y$ are both convex, their (min,+) convolution can be easily computed in $O(n)$ time. For monotone integer sequences bounded by $O(n)$, Chan \emph{et al.} \cite{Chan} achieved $O(n^{1.859})$, later refined to $\tilde{O}(n^{1.5})$ upper bound by Chi \emph{et al.} \cite{Chi}. Bringmann \cite{Bringmann} further considered $\Delta$-near convex functions-those approximable by convex functions within additive error $\Delta$-yielding an $\tilde{O}(n\Delta)$ algorithm. The conclusion we obtain in Section~\ref{sect:faster} slightly broadens the class of functions for which (max,+) convolution can be computed in $O(n)$ time.

Our resource allocation problem is closely related to the subset-sum and Knapsack problem \cite{Knapsack1,Bringmann-BKP}. 
Let $W$ denote the maximum weight of the items, and $P$ denote the maximum profit of the items.
  Pisinger \cite{Knapsack1} shows that (1) the subset-sum problem can be solved in $O(nW)$ time, improving over the trivial $O(n^2W)$ bound,
     and (2) the Knapsack problem can be solved in $O(nWP)$ time.
  Recently, an $\tilde{O}(n+W^2)$ time algorithm is given for the Knapsack problem by Bringmann \emph{et al.} \cite{Bringmann-BKP}.
  See more related work of the Knapsack problem (with the parameter $W$) in \cite{Bringmann-BKP}.
  Note that the Knapsack problem is a special case of (and hence easier than) our resource allocation problem.
    An item with weight $w$ can be seen as a project $j$; moreover, $R_j(w)$ is the profit of this item,
      where $R_j(x_j)=-\infty$ for $x_j\neq w$. Be aware that $m=W$ is the maximum weight of the items.
  

\subsection{Preliminaries: some observations on multisets}

For convenience, in this paper a multiset refers to a multiset of $[m]=\{1,\ldots,m\}$.

A pair of multisets $(A,B)$ is \emph{reducible} if 
     the sum of a nonempty subset of $A$ equals the sum of a nonempty subset of $B$,
and is \emph{irreducible} otherwise.

\begin{example}
Reducible: $(A,B)=(\{1,2,2,2\},\{3,3\})$, $(A,B)=(\{1,3\},\{2,2\})$.

Irreducible: $(A,B)=(\{2,2\},\{3\})$, $(A,B)=(\{3\},\{1,1\})$.
\end{example}

\medskip Denote $\lambda_m=m^2.$

For any multiset $A$, its sum of elements is denoted as $\sum A$.

\begin{lemma}\label{lemma:property-for-m}
A pair of multisets $(A,B)$ is reducible if $\sum A\geq \lambda_m$ and $\sum B\geq \lambda_m$.
\end{lemma}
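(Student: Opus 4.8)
The plan is to reformulate reducibility as the existence of a single positive value that is simultaneously a subset sum of $A$ and of $B$, and then to manufacture such a value out of prefix sums. First I would fix an arbitrary ordering of the elements of each multiset and form the prefix sums $0=s_0<s_1<\cdots<s_N=\sum A$ and $0=t_0<t_1<\cdots<t_M=\sum B$. Since every element lies in $\{1,\ldots,m\}$, consecutive prefix sums differ by at most $m$; in particular any difference $s_i-s_{i'}$ with $i>i'$ (and likewise for the $t$'s) is the sum of a nonempty consecutive block, hence a genuine nonempty subset sum. Because $\sum B\geq m^2$ and each element is at most $m$, the multiset $B$ has at least $m$ elements, so $t_1,\ldots,t_m$ all exist; and since $t_m\leq m\cdot m=m^2\leq \sum A$, each of these $m$ positive prefix sums of $B$ lies in $[1,\sum A]$.

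The conceptual heart is a \emph{window property} for $A$: for every integer $v\in[1,\sum A]$ there is a prefix sum $s_i$ of $A$ with $v-m<s_i\leq v$. I would prove this by taking $s_i$ to be the largest prefix sum not exceeding $v$; the next prefix sum (if any) exceeds $v$ and is at most $s_i+m$, which forces $s_i>v-m$, and the boundary case where no next prefix sum exists only happens when $v=\sum A=s_N$. Applying this to each $v=t_j$ for $1\leq j\leq m$ produces indices $i(j)$ together with offsets $d_j:=t_j-s_{i(j)}\in\{0,1,\ldots,m-1\}$.

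The finish is a pigeonhole on these offsets. If some $d_j=0$, then $t_j=s_{i(j)}$ is a positive common subset sum and we are done. Otherwise all $d_j$ lie in $\{1,\ldots,m-1\}$, which has only $m-1$ values, while there are $m$ indices $j$; hence $d_{j_1}=d_{j_2}$ for some $j_1<j_2$. Then $t_{j_2}-t_{j_1}=s_{i(j_2)}-s_{i(j_1)}$, and since $t_{j_2}>t_{j_1}$ forces $i(j_2)>i(j_1)$, both sides are nonempty subset sums, of $B$ and of $A$ respectively, of equal positive value, exhibiting the required reduction.

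I expect the main subtlety to be bookkeeping rather than a genuine obstacle. The two points to watch are (i) confirming that every quantity I difference is truly a nonempty subset sum, which is automatic here because consecutive-block sums are sub-multisets and $i(j_2)>i(j_1)$ guarantees nonemptiness, and (ii) pinning down exactly where the hypothesis $\lambda_m=m^2$ enters, namely to ensure $B$ has at least $m$ elements and that $t_m\leq m^2\leq \sum A$. The constant $m^2$ is essentially forced by these two inequalities, so I do not anticipate being able to weaken it without a more refined argument, and the symmetry between $A$ and $B$ shows the roles of the two multisets are interchangeable.
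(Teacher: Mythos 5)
Your proof is correct and follows essentially the same route as the paper's: prefix sums of both multisets, matching each of $m$ anchor prefix sums to the largest prefix sum of the other multiset not exceeding it, and a pigeonhole on the resulting offsets in $\{0,\ldots,m-1\}$. The only difference is bookkeeping --- you anchor on the first $m$ prefix sums of $B$ and use $t_m\leq m^2\leq\sum A$ to keep them in range, whereas the paper anchors on $A$ under a WLOG assumption $\sum A\leq\sum B$ and handles the out-of-range case separately.
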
 

\begin{proof}
We first prove an observation: A pair of multisets $(A,B)$ is reducible if $A,B$ each have $m$ elements in $[m]$. 

Without loss of generality, suppose $\sum A\leq \sum B$. For convenience, let $a[i] (i\in[m])$ denote the sum of first $i$ elements in $A$, and $b[j] (j\in[m])$ denote the sum of first $j$ elements in $B$. Notice that $a[i]$ and $b[j]$ are both strictly increasing sequences.

For each $i\in [m]$, let
$$
c[i]=\left\{
                    \begin{array}{ll}
                      a[i]-b[j^*], & \text{the largest }j^*\text{ satisfying }a[i]\geq b[j^*]; \\
                      a[i], & \text{no such }j^*\text{ exists}.
                    \end{array}
                  \right.
$$
We claim that $c[i]\leq m-1$, the proof is as follows. \\
(1) If $c[i]=a[i]-b[j^*]$, and $j^*\ne m$. We have $b[j^*]\leq a[i]<b[j^*+1]$ and $b[j^*+1]-b[j^*]\leq m$, therefore $a[i]-b[j^*]\leq m-1$. \\
(2) If $c[i]=a[i]-b[m]$. Suppose $a[i]-b[m]\geq m$, we have
$\sum A-\sum B\geq a[i]-\sum B=a[i]-b[m]\geq m$, which conflicts to $\sum A\leq \sum B$. \\
(3) If $c[i]=a[i]$. By the definition of $c[i]$ we know $a[i]<b[1]\leq m$.

\smallskip
If there exists $c[i_0]=0$, then $a[i_0]=b[j^*]$, which means $(A,B)$ is reducible. Otherwise, we have $1\leq c[i]\leq m-1$ for each $i\in [m]$. And by Pigeonhole Principle, there exist $c[i_1]=c[i_2]$ ($i_1<i_2$), which means one of the following holds: (1) $a[i_1]-b[j_1^*]=a[i_2]-b[j_2^*]$; (2) $a[i_1]=a[i_2]-b[j_2^*]$. Each of them can demonstrate that $(A,B)$ is reducible.

\bigskip
Finally we go back to Lemma~\ref{lemma:property-for-m}. Suppose $\sum A\geq m^2$, then $A$ has at least $m$ elements (otherwise $\sum A\leq (m-1)m$). Similarly, suppose $\sum B\geq m^2$, then $B$ has at least $m$ elements. By the observation above, $(A,B)$ is reducible. \qed

%
%
%
%
%
%
%
%
%
%
\end{proof}

\begin{remark}\label{remark:conjecture}
Bringmann \emph{et al.} \cite{Bringmann-BKP} gave another result with significantly increased analytical complexity: 

\begin{lemma}\cite{Bringmann-BKP}
A pair of multisets $(A,B)$ is reducible if 
$$|A|\geq 1500\left(\log^3 (2|A|)\mu(A)m\right)^{1/2}$$ 
and 
$$\textstyle\sum B\geq 340000\log(2|A|)\mu(A)m^2/|A|,$$
where $\mu(A)$ denotes the maximal multiplicity of elements in $A$.
\end{lemma}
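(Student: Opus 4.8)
The statement is a structural ``reducibility'' result about subset sums, so I would not attempt a direct pigeonhole argument of the kind used for Lemma~\ref{lemma:property-for-m}; the explicit constants ($1500$, $340000$) and the $\log^3$ factor signal that the proof rests on additive-combinatorial machinery together with a concentration argument. Writing $\Sigma(A)=\{\sum S : \emptyset\neq S\subseteq A\}$ for the set of nonempty subset sums, the overall plan is to show that $\Sigma(A)$ contains a long arithmetic progression, that $\Sigma(B)$ is forced to meet that progression because $\sum B$ is large, and then to conclude that $\Sigma(A)\cap\Sigma(B)$ contains a common value, which is exactly reducibility.

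First I would invoke the engine: a S\'ark\"ozy--Szemer\'edi--Vu--type theorem saying that a multiset of elements of $[m]$ with enough ``effectively independent'' elements has a subset-sum set containing an arithmetic progression of small common difference $d\le m$ and large length $L$. Because $A$ has multiplicity at most $\mu(A)$, its number of distinct contributions is about $|A|/\mu(A)$, and the hypothesis $|A|\ge 1500(\log^3(2|A|)\,\mu(A)\,m)^{1/2}$ is exactly what such a theorem needs (after squaring, $|A|^2/(\mu(A)m)\gtrsim \log^3(2|A|)$) to guarantee a progression $P_A=\{s_0,s_0+d,\ldots,s_0+(L-1)d\}\subseteq\Sigma(A)$ whose span $Ld$ comfortably exceeds $\sum B$ and whose left endpoint satisfies $s_0=O(\log(2|A|)\,\mu(A)\,m^2/|A|)$ (the unavoidable end-gap of a subset-sum interval). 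This is where the $\log^3$ factor and the large absolute constant originate: they come from the Chernoff/anticoncentration estimates used to show that the achievable subset sums of $A$ are spread densely over a long interval, forcing $P_A$.

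Next I would use $\sum B\ge 340000\,\log(2|A|)\,\mu(A)\,m^2/|A|$ to place a nonempty subset sum of $B$ inside $P_A$, in the cleanest case $d=1$ where $P_A$ is a full integer interval $I=[s_0,s_0+(L-1)]$ of width at least $m$. Listing the elements of $B$ in any order, the prefix sums $0=t_0<t_1<\cdots<t_{|B|}=\sum B$ increase in steps of at most $m$, hence cannot jump over a window of width $\ge m$; since the hypothesis forces $\sum B\ge s_0$, some prefix sum $t_i$, which is a nonempty subset sum of $B$, lands in $I\subseteq\Sigma(A)$, and that common value certifies reducibility. This also explains the $1/|A|$ factor in the bound on $\sum B$: it is precisely the size $s_0$ of the end-gap that the total sum of $B$ must clear.

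The main difficulty, and the source of the extra analytic complexity relative to Lemma~\ref{lemma:property-for-m}, is the common difference of the progression extracted from $A$. The long-progression machinery yields a difference $d$ that need not equal $1$, and when $d=\gcd(A)>1$ the set $\Sigma(A)$ lives inside $d\mathbb{Z}$, so the clean prefix-sum argument fails unless the subset sums of $B$ also meet the residue class $s_0\bmod d$. Resolving this---either by reducing to the case $\gcd(A)=1$ (dividing out a common factor only shrinks $m$, which helps) or by tracking $\gcd(A)$, $\gcd(B)$ and aligning the two progressions' residues---while simultaneously carrying the explicit constants through the anticoncentration inequalities that produce the long progression, is where essentially all of the technical work concentrates.
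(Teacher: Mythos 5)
First, note that the paper does not prove this lemma at all: it appears inside Remark~\ref{remark:conjecture} purely as a citation of Bringmann \emph{et al.}~\cite{Bringmann-BKP}, included to contrast its ``significantly increased analytical complexity'' with the elementary pigeonhole proof of Lemma~\ref{lemma:property-for-m}. So there is no in-paper proof to compare yours against, and the authors clearly do not intend to supply one.

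Judged on its own terms, your outline points in the right general direction---the argument in the cited work does proceed by showing that the subset sums of a sufficiently dense multiset of $[m]$ contain a long arithmetic progression with controlled starting point and common difference, and then by landing a prefix sum of $B$ inside it. But what you have written is a roadmap, not a proof, and the two places where all the content lives are exactly the two places you wave at. (i) The quantitative long-progression theorem itself: you motivate the constants $1500$ and $340000$ and the $\log^3$ factor by reverse-engineering them from the hypothesis, which is circular as a proof; invoking ``a S\'ark\"ozy--Szemer\'edi--Vu--type theorem'' whose hypotheses are declared to be ``exactly what such a theorem needs'' establishes nothing. (ii) The case of common difference $d>1$, which you correctly identify as the main obstruction (since the subset sums of $B$ need not meet the residue class $s_0 \bmod d$) and then explicitly defer. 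Your suggested fix of reducing to $\gcd(A)=1$ by dividing out a common factor is not as innocuous as you suggest, because the elements of $B$ need not be divisible by $\gcd(A)$, so the two multisets do not rescale together; the actual argument has to work with almost-divisors and align residues, which is precisely the technical core. As submitted, the proposal proves the statement only in the clean case $d=1$, and even then only modulo an unproved black box.
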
 
\end{remark}

\subsection{Irreducible pair $(A,B)$ with $\sum A-\sum B=1$}\label{subsect:ip-brute-force}

Suppose we want to enumerate irreducible pairs $(A,B)$ satisfying $\sum A - \sum B= 1$ (for some fixed small $m$) (which will be used in our algorithm).
    We only need to focus on $(A,B)$ with $\sum B<\lambda_m$
   (since otherwise $\sum A, \sum B \geq \lambda_m$, and $(A,B)$ must be reducible by \Cref{lemma:property-for-m}).
Therefore we can enumerate all target pairs by brute-force programs
  (check all $(A,B)$ where $\sum B<\lambda_m$ and $\sum A=\sum B + 1$).

\begin{example}\label{example:ip2}
All irreducible pairs with $\sum A -\sum B=1$ for $m=2$ are:
\begin{equation*}
\begin{aligned}
    A=&\{1\}, &B=& \emptyset;\\
    A=&\{2\}, &B=& \{1\}.
\end{aligned}
\end{equation*}
\end{example}

\begin{example}\label{example:ip3}
All irreducible pairs with $\sum A -\sum B=1$ for $m=3$ are:
\begin{equation*}
\begin{aligned}
    A=&\{1\}, &B=& \emptyset;\\
    A=&\{2\}, &B=& \{1\};\\
    A=&\{3\}, &B=& \{2\};\\
    A=&\{3\}, &B=& \{1,1\};\\
    A=&\{2,2\}, &B=& \{3\}.
\end{aligned}
\end{equation*}
\end{example}

The number of irreducible pairs with $\sum A-\sum B=1$ will be denoted by $p_m$, or $p$ for simplicity.
According to our brute-force programs, 
$$p_1=1,~p_2=2,~p_3=5,~p_4=11,~p_5=27.$$

\section{A crucial property of the optimal $k$-profiles}\label{sect:greedy-for-m}

\newcommand{\diff}{\mathsf{diff}}

\begin{definition}
Assume $\x=(x_1,...,x_n)\in \calP_{k}$ and $\y=(y_1,...,y_n)\in \calP_{k+1}$.

Define $\diff(\x,\y)=(A,B)$ and call it the difference of $(\x,\y)$, where 
\begin{eqnarray}
A&=&\{y_i-x_i \mid i\in [n] \text{ and } y_i>x_i\}. \label{eqn:diff1} \\
B&=&\{x_i-y_i \mid i\in [n] \text{ and } x_i>y_i\}. \label{eqn:diff2}
\end{eqnarray}

Notice that $\sum A - \sum B = \sum_i (y_i-x_i) = (k+1)-(k) = 1$.
\end{definition}

\begin{example}
$\diff((1,1),(3,0))=(\{2\},\{1\})$. $\diff((2,2,2),(3,1,3))=(\{1,1\},\{1\})$.
\end{example}

\begin{lemma}\label{lemma:diff-irreducible}
For any optimal $k$-profile $\x$, where $k<nm$,
  there exists an optimal $(k+1)$-profile $\y$ such that $\diff(\x,\y)$ is irreducible.
\end{lemma}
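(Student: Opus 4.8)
The plan is to start from an arbitrary optimal $(k+1)$-profile and repeatedly simplify it until its difference with $\x$ becomes irreducible, using optimality at \emph{both} endpoints to preserve optimality along the way. Since $k<nm$ we have $k+1\le nm$, so $\calP_{k+1}\neq\emptyset$ and an optimal $(k+1)$-profile exists; fix one, call it $\y$, and write $\diff(\x,\y)=(A,B)$. If $(A,B)$ is already irreducible we are done, so assume it is reducible: there are nonempty $A'\subseteq A$ and $B'\subseteq B$ with $\sum A'=\sum B'$.

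The key idea is a two-sided exchange. Each element of $A'$ comes from a coordinate $i$ with $y_i>x_i$, and each element of $B'$ from a coordinate $j$ with $x_j>y_j$; all these coordinates are distinct. I would transfer exactly these coordinate changes in both directions. Let $\x'$ be obtained from $\x$ by resetting its $A'$- and $B'$-coordinates to match $\y$; then $\sum_i x'_i = k+\sum A'-\sum B' = k$, so $\x'\in\calP_k$. Symmetrically, let $\y'$ be obtained from $\y$ by reverting its $A'$- and $B'$-coordinates back to $\x$; then $\sum_i y'_i = (k+1)-\sum A'+\sum B' = k+1$, so $\y'\in\calP_{k+1}$. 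Writing $\Delta$ for the revenue gained in passing from $\x$ to $\x'$, namely $\sum (R_i(y_i)-R_i(x_i))$ over the affected coordinates, the same coordinate changes are undone in passing from $\y$ to $\y'$, so the revenue of $\y'$ equals that of $\y$ minus $\Delta$.

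Optimality now sandwiches $\Delta$ from both sides: since $\x$ is optimal in $\calP_k$ and $\x'\in\calP_k$, we get $\Delta\le 0$; since $\y$ is optimal in $\calP_{k+1}$ and $\y'\in\calP_{k+1}$, we get $-\Delta\le 0$. Hence $\Delta=0$ and $\y'$ is again an optimal $(k+1)$-profile. Because $\y'$ agrees with $\x$ on all $A'$- and $B'$-coordinates, we have $\diff(\x,\y')=(A\setminus A',\,B\setminus B')$, and the nonnegative integer potential $\sum A+\sum B$ strictly decreases (by $\sum A'+\sum B'=2\sum A'>0$, since $A'$ is nonempty). This potential cannot decrease forever, so iterating reaches, after finitely many steps, an optimal $(k+1)$-profile whose difference with $\x$ is irreducible.

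I expect the main obstacle to be the two-sided exchange itself: verifying that the revenue change from $\x$ to $\x'$ and the one from $\y$ to $\y'$ are exact negatives of each other, so that a single scalar $\Delta$ governs both, followed by the sandwich step, which is the only place where both optimality hypotheses are genuinely used. Feasibility of $\x'$ and $\y'$ is automatic, since each merely copies entries already lying in $\{0,\ldots,m\}$ from a valid profile, and the termination argument via the monotone potential is routine.
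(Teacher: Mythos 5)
Your proposal is correct and follows essentially the same route as the paper's proof: both take an arbitrary optimal $(k+1)$-profile, use a matching pair of sub(multi)sets witnessing reducibility to perform a sum-preserving exchange, sandwich the revenue change $\Delta$ between $0$ and $0$ using optimality of $\x$ and $\y$ respectively, and iterate with a strictly decreasing potential until the difference is irreducible. The only cosmetic difference is that you make both modified profiles $\x'$ and $\y'$ explicit, while the paper phrases the same two-sided argument as a claim that the affected revenue sums are equal.
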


\begin{proof}
    First of all, take any $(k+1)$-optimal profile $\y$.
    If $\diff(\x,\y)$ is irreducible, we are done.
    Now, suppose to the opposite that $(A,B)=\diff(\x,\y)$ is reducible. 

For convenience, denote $I=\{i \in [n] \mid y_i>x_i\}$ and $J= \{j \in [n] \mid x_j > y_j\}$.
We have $A = \{y_i-x_i \mid i\in I\}$ and $B=\{x_j-y_j \mid j \in J\}$ following (\ref{eqn:diff1}) and (\ref{eqn:diff2}).

As $(A,B)$ is reducible,
  there exist nonempty sets $I_0\subseteq I, J_0\subseteq J$ such that $\sum_{i\in I_0}(y_i-x_i)=\sum_{j\in J_0}(x_j-y_j)$, which implies that
\begin{equation*}
  \sum_{i\in I_0}y_i + \sum_{j\in J_0}y_j = \sum_{j\in J_0}x_j + \sum_{i\in I_0}x_i.
\end{equation*}

Note that $I_0\cap J_0=\emptyset$ because $I\cap J=\emptyset$. We further obtain
\begin{equation}\label{eq:adjust-for-m}
  \sum_{i\in I_0\cup J_0}y_i = \sum_{i\in I_0\cup J_0}x_i.
\end{equation}

We claim that $\sum_{i\in I_0\cup J_0}{R_i(y_i)}=\sum_{i\in I_0\cup J_0}{R_i(x_i)}$. The proof is as follows.

If $\sum_{i\in I_0\cup J_0}{R_i(y_i)}<\sum_{i\in I_0\cup J_0}{R_i(x_i)}$,
  we can see $\y$ is not $(k+1)$-optimal because by setting $y_i=x_i$ for $i\in I_0\cup J_0$, the revenue of $\y$ is enlarged.
Similarly, if $\sum_{i\in I_0\cup J_0}{R_i(y_i)}>\sum_{i\in I_0\cup J_0}{R_i(x_i)}$,
  we can see $\x$ is not $k$-optimal because by setting $x_i=y_i$ for $i\in I_0\cup J_0$, the revenue of $\x$ is enlarged.
Therefore, it must hold that $\sum_{i\in I_0\cup J_0}{R_i(y_i)}=\sum_{i\in I_0\cup J_0}{R_i(x_i)}$.

\smallskip Following the claim above,
  the revenue of $\y$ is unchanged (and hence $\y$ is still optimal) if we modify $y_i=x_i$ for all $i\in I_0\cup J_0$. 
    Note that such a modification of $\y$ would decrease $\sum A$, and  
     moreover $\sum A=\sum B + 1$ is always positive, therefore eventually $\y$ cannot be modified. 
    This means that $\diff(x,y)$ becomes irreducible after several modifications of $\y$. So the lemma holds. \qed
\end{proof}

As a side note, \Cref{lemma:diff-irreducible} implies that
\emph{for any optimal $k$-profile $\x$, where $k<nm$,
  there exists an optimal $(k+1)$-profile $\y$ such that $\sum_i |y_i-x_i|<2\lambda_m$.}

To see this, first find the optimal $(k+1)$-profile $\y$ with $\diff(\x,\y)=(A,B)$ irreducible.
Observe that $\sum B < \lambda_m$.
   Otherwise, $\sum B\geq \lambda_m$ and $\sum A\geq \lambda_m$, and $(A,B)$ is reducible by \Cref{lemma:property-for-m}.
 Therefore $\sum_i |y_i-x_i|=\sum B + \sum A < 2\lambda_m$.

\section{Algorithm for finding optimal $k$-profile}\label{sect:mainalgorithm}

It is sufficient to solving the following subproblem (for $k$ from $0$ to $nm-1$):

\begin{problem}\label{prob:sub}
Given a $k$-profile $\x^{(k)}$.
Among all the $(k+1)$-profile $\mathbf{y}$ with $\diff(\x,\y)$ being irreducible,
  find the one, denoted by $\x^{(k+1)}$, with the largest revenue.
\end{problem}

Clearly, we can set $\x^{(0)}$ to be the unique (and optimal) $0$-profile.
 Then, by induction, $\x^{(1)}$, $\ldots$, $\x^{(nm)}$ would all be optimal according to \Cref{lemma:diff-irreducible}.

In what follows we solve this subproblem in $O(f(m) \log n)$ time, where $f(m)$ is some function of $m$,
  and factor $\log n$ comes from the application of heap.

\newcommand{\DO}{\mathsf{DO}}
\newcommand{\UNDO}{\mathsf{UNDO}}

For convenience, assume $\x^{(k)}=(x_1,\ldots,x_n)$.

\subsubsection*{Data structures.}
Our algorithm uses $2m$ heaps. 

For each $d\in [m]$, we build a max-heap $\DO_d$ whose items are 
       those projects $i\in[n]$ for which $x_i+d\leq m$,
         and the value of item $i$ is defined by $R_i(x_i+d)-R_i(x_i)$ -- 
            the increase of revenue when we distribute $d$ more efforts into project $i$.
\begin{equation}
\begin{split}
  \DO_d = \{\langle i,R_i(x_i+d)-R_i(x_i)\rangle \mid i\in[n], x_i+d\leq m\}.
\end{split}
\end{equation}
 
For each $d\in [m]$, we build a min-heap $\UNDO_d$ whose items are 
       those projects $i\in[n]$ for which $x_i-d\geq 0$,
         and the value of item $i$ is defined by $R_i(x_i)-R_i(x_i-d)$ -- 
            the lost of revenue when we withdraw $d$ efforts from project $i$.
\begin{equation}
\begin{split}
  \UNDO_d = \{\langle i,R_i(x_i)-R_i(x_i-d)\rangle\mid i\in [n], x_i-d\geq 0\}.
\end{split}
\end{equation}

Observe that if $x_i$ is changed,
   we shall update the value of item $i$ (calling UPDATE\_VALUE) in each of the $2m$ heaps.
  (To be more clear, sometimes we may have to call DELETE or INSERT instead of UPDATE\_VALUE,
    since the condition $x_i+d\leq m$ may change, so as $x_i-d\geq 0$ after the change of $x_i$.)

\subsection{The algorithm}

Consider all irreducible pairs $(A,B)$ with $\sum A-\sum B=1$. Recall \Cref{example:ip2,example:ip3}.
For convenience, denote them by $(A_1,B_1),\ldots, (A_p,B_p)$, where $p$ is the number of such pairs.
(Note: we can generate and store these $p$ pairs by a brute-force preprocessing procedure, whose running time is only related to $m$. )

For each $c \in [p]$,
  denote by $\y^{(c)}$  the best $(k+1)$-profile among those satisfying $\diff(\x^{(k)},\y)=(A_c,B_c)$.
By Lemma~\ref{lemma:diff-irreducible}, the best among $\y^{(1)},\ldots,\y^{(p)}$ can serve as $\x^{(k+1)}$.

\medskip How do we compute $\y^{(c)}$ efficiently?

Let us first consider a simple case, e.g., $m=2$ and $(A_c,B_c)=(\{2\},\{1\})$.
In this case computing $\y^{(c)}$ is equivalent to solving the following problem:
    
Find the indices $i$ and $j$ that maximize 
    $$R_i(x_i+2)-R_i(x_i) - (R_j(x_j) - R_j(x_j-1)),$$
subject to  $$i\neq j, x_i+2\leq m, x_j-1\geq 0.$$

We can find $i$ so that $R_i(x_i+2)-R_i(x_i)$ is maximized using heap $\DO_2$,
 and find $j$ so that $R_j(x_j)-R_j(x_j-1)$ is minimized using heap $\UNDO_1$.
Clearly, $i\neq j$ because $x_i=0$ whereas $x_j>0$, and so the problem is solved.

\medskip Next, let us consider a more involved case: $m=3$ and $(A_c,B_c)=(\{2\},\{1\})$.
If we do the same as in the above case, it might occur that $i=j$ (for those $x_i=1$,
  item $i$ is in $\DO_2$ and $\UNDO_1$ simultaneously when $m=3$).

Nevertheless, utilizing the heaps, the above maximization problem can still be solved efficiently:
Find the best $i_1$ and second best $i_2$ in $\DO_2$, the best $j_1$ and second best $j_2$ in $\UNDO_1$,
  and moreover, try every combination $(i,j)\in \{(i_1,j_1),(i_1,j_2),(i_2,j_1),(i_2,j_2)\}$.
One of them must be the answer.
(Indeed, we can exclude $(i_2,j_2)$ from the trying set.)

\begin{figure}
  \centering
  \includegraphics[width=.8\textwidth]{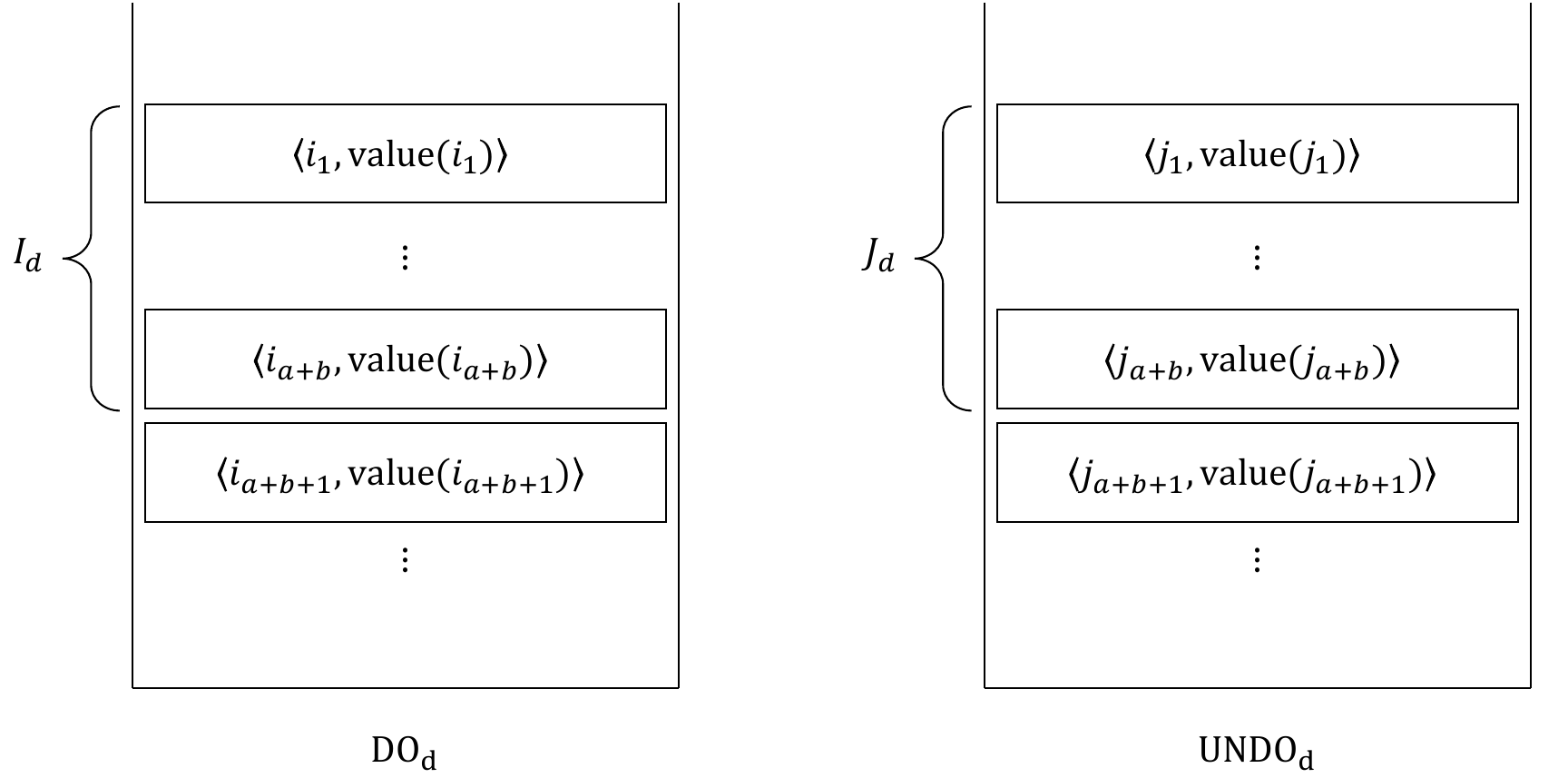}
  \caption{An illustration of the algorithm.}\label{fig:1}
\end{figure}

\medskip With the experience on small cases, we now move on to the general case.
   For $d\in [m]$, let $a_d$ denote the multiplicity of $d$ in $A_c$, and $b_d$ the multiplicity of $d$ in $B_c$.
    Let $a=\sum a_d$ and $b=\sum b_d$ be the number of elements in $A_c$ and $B_c$, respectively
    (which are bounded by $\lambda_m$ according to the analysis in \Cref{subsect:ip-brute-force}).

We now use a brute-force method to compute $\y^{(c)}$.   

1. For each $d\in [m]$, compute the set $I_d$ that contains the best $a+b$ items in $\DO_d$,
 and compute $J_d$ that contains the best $a+b$ items in $\UNDO_d$. See Figure~\ref{fig:1}.

2. Enumerate $(I'_1,\ldots,I'_m),(J'_1,\ldots,J'_m)$ such that 
    $$\left\{
        \begin{array}{ll}
          \text{$I'_d\subseteq I_d$ and $I'_d$ has size $a_d$,}\\
          \text{$J'_d\subseteq J_d$ and $J'_d$ has size $b_d$.}
        \end{array}
      \right.$$
   
    \quad When $I'_1,\ldots,I'_m,J'_1,\ldots,J'_m$ are pairwise-disjoint,
       we obtain a solution:  
    $$
        \text{increase $x_i$ by $d$ for $i \in I'_d$, and decrease $x_j$ by $d$ for $j \in J'_d$.}
    $$
    Select the best solution and it is $\y^{(c)}$.

The enumeration to compute an $\y^{(c)}$ takes $O\left((a+b)m\log n+g(m)\right)$ time, where
$$g(m)=O\left(\binom{a+b}{a_1}\dots\binom{a+b}{a_m}\binom{a+b}{b_1}\dots\binom{a+b}{b_m}\right).$$
So Problem~\ref{prob:sub} can be solved in $O\left(p(a+b)m\log n+pg(m)\right)$ time, recall that $p$ is the number of irreducible pairs $\left(A,B\right)$ satisfying $\sum A-\sum B=1$, entirely determined by $m$.

\section{Separable convex objective function}\label{sect:convex}

In this section, we consider a special case where 
   all the separated objective function $R_j$ are convex (the concave case has been studied extensively as mentioned in the introduction).
We present two algorithms for this special case:
  One runs in $O(nm + n\log n)$ time and it finds the optimal $k$-profile for all $k$.
  The other runs in $O(nm)$ time and it finds the optimal solution for a given $k$.

\begin{remark}
If $m$ is a constant, our first algorithm in this section runs in $O(n\log n)$ time, as the algorithm shown in \Cref{sect:mainalgorithm}.
However, the constant factor of the algorithm in this section is much smaller.
\end{remark}

\begin{lemma} \label{lemma:Only one not full}
There exists an optimal $k$-profile satisfies: At most one project receives more than $0$ and less than $m$ efforts.
\end{lemma}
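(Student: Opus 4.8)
The plan is to use a local exchange argument driven entirely by convexity. Start from any optimal $k$-profile $\x=(x_1,\ldots,x_n)$ and define a potential $\Phi(\x)$ to be the number of indices $j$ with $0<x_j<m$; call such a project \emph{fractional}. I will show that whenever $\Phi(\x)\geq 2$, one can transform $\x$ into another optimal profile with strictly smaller potential. Iterating this transformation then produces an optimal profile with $\Phi\leq 1$, which is exactly the claim.

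The key step is as follows. Pick two distinct fractional projects $i$ and $j$, and consider transferring $t$ units from $i$ to $j$, i.e. set $\phi(t)=R_i(x_i-t)+R_j(x_j+t)$ while keeping all other coordinates fixed so that the total remains $k$. Since $R_i$ and $R_j$ are convex, both $t\mapsto R_i(x_i-t)$ and $t\mapsto R_j(x_j+t)$ are convex, hence $\phi$ is convex on the integer range $t\in[\max(x_i-m,-x_j),\,\min(x_i,m-x_j)]$ dictated by the feasibility constraints $0\le x_i-t\le m$ and $0\le x_j+t\le m$. Because $i$ and $j$ are fractional, this range contains $t=0$ strictly in its interior. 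A convex function on an interval attains its maximum at an endpoint, so in particular $\max\{\phi(t_{\min}),\phi(t_{\max})\}\ge\phi(0)$, the current contribution of $i$ and $j$.

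Choosing the better endpoint yields a feasible profile whose total revenue is no smaller than that of $\x$; since $\x$ is optimal, the revenue is in fact unchanged, so the new profile is again optimal. Moreover, at either endpoint one of the two touched coordinates is driven to $0$ or to $m$: at $t_{\max}=\min(x_i,m-x_j)$ either $x_i-t_{\max}=0$ or $x_j+t_{\max}=m$, and symmetrically at $t_{\min}$. Thus at least one of $i,j$ becomes non-fractional while no other coordinate changes, so $\Phi$ strictly decreases.

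The only things needing care are the bookkeeping of the feasible range of $t$ and the termination argument. For termination, note that the move alters only $x_i$ and $x_j$, both already fractional, turns at least one of them into $0$ or $m$, and leaves the other in $[0,m]$; hence $\Phi$ drops by at least one at each step, and the process halts after at most $n$ steps. I do not expect a genuine obstacle here: the crux is simply the observation that $\phi$ is a sum of two convex functions of $t$ and therefore its maximum over the feasible interval lies at an endpoint, with $\phi(0)$ no larger than that maximum, which is immediate from convexity.
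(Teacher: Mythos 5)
Your proof is correct and takes essentially the same approach as the paper: both perform a convexity-based exchange between two partially-filled projects, pushing effort from one to the other until one of them hits $0$ or $m$ without decreasing revenue. The only difference is packaging — the paper iterates one-unit transfers and argues the marginal-gain comparison is preserved, while you collapse this into the single observation that the convex transfer function $\phi(t)$ is maximized at an endpoint of its feasible interval.
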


\begin{proof}
  We prove it by contradiction. Suppose $\x = (x_1,\ldots,x_n)$ is an optimal $k$-profile.
    Assume $0<x_i<m$, $0<x_j<m$ for some $i\neq j$. 
    
  Assume $R_i(x_i+1)-R_i(x_i)\geq R_j(x_j+1)-R_j(x_j)$; otherwise we swap $i$ and $j$.
    We can withdraw one effort from project $j$ and give it to project $i$ without decreasing the total revenue. 
  By convexity of $R_i$ and $R_j$, the inequality $R_i(x_i+1)-R_i(x_i)\geq R_j(x_j+1)-R_j(x_j)$ still holds after such an adjustment, 
     so this process can be repeated until $x_i=m$ or $x_j=0$.
\end{proof}

First we sort the projects by $R_j(m)$ in descending order in $O(n\log n)$ time. 
Following \Cref{lemma:Only one not full}, when $k \bmod m = 0$, the largest revenue equals $\sum_{j=1}^{k/m} R_j(m)$ (trivial proof omitted).
Assume $k \bmod m\neq 0$ in the following.
In this case, there must one project that is allocated with $k \bmod m$ efforts. 

Denote $q(k) = \lfloor \frac{k}{m} \rfloor$. 

For $i\leq q(k)+1$, denote by $\x^{(i)}$ the profile that allocates $k\bmod m$ efforts to project $i$,
   and $m$ efforts to each project $j\in [q+1]\setminus \{i\}$.

For $i>q(k)$, denote by $\x^{(i)}$ the profile that allocates $k\bmod m$ efforts to project $i$,
   and $m$ efforts to each project  $j\in [q]$.
   
\begin{lemma} \label{lemma:Two type of optimal}
   One of $\x^{(1)},\ldots,\x^{(n)}$ is optimal. (It is trivial. Proof omitted.)
\end{lemma}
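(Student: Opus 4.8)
The plan is to start from an optimal $k$-profile furnished by \Cref{lemma:Only one not full} and massage it into one of the $\x^{(i)}$ through a routine exchange argument, exploiting that the projects are sorted so that $R_1(m)\geq\cdots\geq R_n(m)$.

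First I would pin down the shape of the optimum. By \Cref{lemma:Only one not full} there is an optimal $k$-profile $\x=(x_1,\ldots,x_n)$ in which at most one coordinate lies strictly between $0$ and $m$. Because we assume $k\bmod m\neq 0$, at least one coordinate must be partial: if every $x_j\in\{0,m\}$ then $k=\sum_j x_j$ would be a multiple of $m$, a contradiction. Hence exactly one project, say $i^*$, is partial. Writing $t$ for the number of full coordinates gives $k=tm+x_{i^*}$ with $0<x_{i^*}<m$, which forces $x_{i^*}=k\bmod m$ and $t=q(k)=q$. So the optimal profile consists of $q$ full projects (value $m$), one partial project $i^*$ (value $k\bmod m$), and the remaining projects empty.

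Next I would fix $i^*$ and optimize the choice of the $q$ full projects. Its contribution $R_{i^*}(k\bmod m)$ is fixed, so the revenue equals $R_{i^*}(k\bmod m)+\sum_{j\in S}R_j(m)$, where $S$ is the set of full projects, $|S|=q$ and $i^*\notin S$. If $S$ is not the set of the $q$ projects of largest $R_j(m)$ among $[n]\setminus\{i^*\}$, then some full project $a\in S$ and some empty project $b\notin S\cup\{i^*\}$ satisfy $R_b(m)\geq R_a(m)$; making $b$ full and $a$ empty leaves $i^*$ untouched and does not decrease the revenue. Iterating, I may assume $S$ is exactly the top $q$ projects by $R_j(m)$ other than $i^*$, namely $\{1,\ldots,q+1\}\setminus\{i^*\}$ when $i^*\leq q+1$ and $\{1,\ldots,q\}$ when $i^*>q+1$. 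These are precisely the full sets in the two-case definition of $\x^{(i^*)}$, so the massaged profile equals $\x^{(i^*)}$, which is therefore optimal and is one of $\x^{(1)},\ldots,\x^{(n)}$.

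As the statement itself flags, the argument is essentially mechanical and I do not expect a genuine obstacle; convexity is used only indirectly, through \Cref{lemma:Only one not full}. The single point needing care is matching the abstract rule ``take the $q$ largest full projects other than $i^*$'' to the paper's split into $i\leq q+1$ and $i>q$, in particular checking that the two defining cases agree at the boundary $i=q+1$ (both yield full set $\{1,\ldots,q\}$); the rest is a standard exchange argument that never moves the lone partial coordinate.
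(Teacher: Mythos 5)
Your proof is correct and is exactly the routine argument the paper has in mind when it omits the proof as trivial: combine \Cref{lemma:Only one not full} with $k\bmod m\neq 0$ to force one partial project holding $k\bmod m$ efforts plus $q(k)$ full ones, then an exchange (or simply ``the top-$q$ set maximizes the sum'') shows the full set may be taken to be the largest $R_j(m)$ values excluding the partial project, which is $\x^{(i^*)}$. Your boundary check at $i=q+1$ is a sensible extra verification; no gaps.
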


Denote by $Ans[k]$ the largest revenue of $k$-profile.

Denote by $Ans_1[k]= \max( \text{revenue of } x^{(i)} : i\leq q(k)+1\}$.

Denote by $Ans_2[k]= \max( \text{revenue of } x^{(i)} : i> q(k)+1\}$.

It follows from \cref{lemma:Two type of optimal} that $Ans[k] = \max (Ans_1[k], Ans_2[k])$.

We show how we compute the array $Ans_1$ altogether in $O(nm)$ time in the following.
The array $Ans_2$ can also be computed in $O(nm)$ time using similar idea (details omitted).

For each $c\in [m]\setminus 0$, 
 we compute $Ans_1[k]$ for $k$ congruent to $c$ (modulo $m$) in $O(n)$ time as follows,
    and thus obtain $Ans_1$ in $O(n m)$ time.

Define $r_j = R_j(m)-R_j(c)$. According to the definition of $x^{(i)}$ for $i\leq q(k)+1$,
\begin{equation}
    Ans_1[k] = \sum_{j=1}^{q(k)+1} R_j(m) - \min(r_1,\ldots, r_{q(k)+1}),
\end{equation}

As $k$ increases by $m$, quotient $q(k)$ increases by $1$, and we can compute 
  the term $\sum_{j=1}^{q(k)+1} R_j(m)$ and $\min(r_1,\ldots, r_{q(k)+1})$ both in $O(1)$ time.
Therefore it takes $O(1)$ time for each $k$ congruent to $c$.

\bigskip Now we move on the problem that asks $Ans[k]$ for a certain $k$.

In this problem, we do not have to sort $R_j(m)$. Instead,
  we only need to find out the largest $q(k)+1$ items of $R_1(m),\ldots, R_n(m)$,
     which takes $O(n)$ time through the algorithm for finding the $K$-th largest number in an array.
     Therefore, we cut off the term $O(n\log n)$ for this easier problem.

\section{An alternative algorithm when $m=2$}\label{sect:faster}

In this section, we describe an algorithm for $m=2$
  which costs $O(n)$ time after sorting.
  It also solves the problem for all $k~(0\leq k\leq mn)$.
 
\subsection{Preliminaries: some observations on (max,+) convolution}
\begin{definition}\label{def:oscillating-concave}
Given a function $g:[n]\rightarrow \mathbb{R}$, we say $g$ is \emph{oscillating concave}, if it satisfies the following properties:\\
For any $k$,\\
(1) $g(2k)-g(2k-2)\geq g(2k+2)-g(2k)$ (namely, $g(2k)$ is concave);\\
(2) $g(2k+2)-g(2k+1)\geq g(2k+1)-g(2k)$; \\
(3) $g(2k+1)-g(2k)\leq g(2k)-g(2k-1)$; \\
(4) $g(2k+1)-g(2k)$ is decreasing for $k$; \\
(5) $g(2k)-g(2k-1)$ is decreasing for $k$.
\end{definition}

\begin{lemma}\label{lemma:max-plus-convolution}
Let $f:[n]\rightarrow \mathbb{R}$ be a concave function and $g:[n]\rightarrow \mathbb{R}$ an oscillating concave function. The (max,+) convolution of $f$ and $g$:
$$h(k)=\max_{1\leq i\leq k}\left(f(i)+g(k-i)\right), 1\leq k\leq n,$$
can be computed in $O(n)$ time.
\end{lemma}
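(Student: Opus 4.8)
The plan is to reduce the computation of $h$ to a constant number of \emph{concave--concave} $(\max,+)$ convolutions, each computable in $O(n)$ time, and then recover $h$ by one pointwise maximum. The starting point is the classical fact already invoked in the introduction (for convex sequences): when both arguments are concave, the increment sequence of their $(\max,+)$ convolution is exactly the descending merge of the two individual increment sequences. Since each concave input is already sorted by its decreasing increments, the merge, and hence the whole output array obtained by prefix sums, costs only linear time. So the entire burden is to carve $h$ into pieces that are genuinely concave--concave.

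First I would split the maximization defining $h(k)$ according to the parity of $k-i$:
\[
h(k)=\max\Big(\max_{1\le i\le k,\ k-i\text{ even}} f(i)+g(k-i),\ \max_{1\le i\le k,\ k-i\text{ odd}} f(i)+g(k-i)\Big).
\]
Writing $g_{\mathrm e}(j)=g(2j)$ and $g_{\mathrm o}(j)=g(2j+1)$ for the even- and odd-indexed restrictions of $g$, the first inner maximum evaluates $g$ only through $g_{\mathrm e}$ and the second only through $g_{\mathrm o}$. The structural input I would establish is that both restrictions are concave: $g_{\mathrm e}$ is concave directly by condition~(1); and, setting $\delta(t)=g(t)-g(t-1)$, we have $g(2k+1)-g(2k-1)=\delta(2k)+\delta(2k+1)$, so concavity of $g_{\mathrm o}$ follows by adding the two monotonicities $\delta(2k)\ge\delta(2k+2)$ and $\delta(2k+1)\ge\delta(2k+3)$ furnished by conditions~(5) and~(4).

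Next I would fix the parity of $i$ as well, which inside each inner maximum is forced by the parity of $k$, and write $f_{\mathrm e}(i')=f(2i')$, $f_{\mathrm o}(i')=f(2i'+1)$. After the substitution $i=k-2j$ (even case) or $i=k-2j-1$ (odd case), each inner maximum becomes an honest $(\max,+)$ convolution of one of $\{f_{\mathrm e},f_{\mathrm o}\}$ with one of $\{g_{\mathrm e},g_{\mathrm o}\}$ at a shifted index. Because an equally spaced subsample of a concave sequence is again concave, all four of $f_{\mathrm e},f_{\mathrm o},g_{\mathrm e},g_{\mathrm o}$ are concave, so each of the (at most four) convolutions $f_{\mathrm e}\ast g_{\mathrm e},\,f_{\mathrm o}\ast g_{\mathrm e},\,f_{\mathrm e}\ast g_{\mathrm o},\,f_{\mathrm o}\ast g_{\mathrm o}$ is concave--concave and costs $O(n)$. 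For each $k$, $h(k)$ is then the maximum of the two relevant precomputed entries, giving $O(n)$ overall.

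The step needing the most care is the index bookkeeping: one must check that the constraints $1\le i\le k$ together with the domain of $g$ (including the boundary value $g(0)$) translate into the correct, non-empty index windows in each of the four convolutions, and that the two candidates combined for a given $k$ really cover every admissible split $k=i+(k-i)$. I also expect to verify, and would flag, that this parity-splitting argument uses only conditions~(1),~(4),~(5); conditions~(2) and~(3) are not needed here, which suggests the oscillating-concave definition is slightly stronger than what \Cref{lemma:max-plus-convolution} alone requires. If instead one sought a single-pass merge treating $g$ as a whole, then conditions~(2)--(3), which force the increments of $g$ into a sawtooth where each even increment dominates its neighbouring odd increments, would become the natural hypotheses, and correctly threading that oscillation through one merge would be the main obstacle.
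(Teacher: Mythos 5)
Your proof is correct, but it takes a genuinely different route from the paper's. The paper proves a local-stability property of the maximizer: if $i_k$ attains $\max_i\left(f(i)+g(k-i)\right)$, then some maximizer for $k+1$ lies in $\{i_k-1,i_k,i_k+1\}$, so $h(k+1)$ is obtained from $h(k)$ by checking three candidates; the proof is a case analysis by the parities of $x=k-i_k$ and $y=k+1-i_{k+1}$ that invokes all five conditions of Definition~\ref{def:oscillating-concave}. You instead split the index set by the parity of $k-i$ (which, for fixed $k$, also fixes the parity of $i$), observe that the even and odd subsamples $g_{\mathrm e},g_{\mathrm o}$ are concave --- $g_{\mathrm e}$ by condition~(1), and $g_{\mathrm o}$ because $g(2k+3)-g(2k+1)$ is a sum of two increments that are separately decreasing by conditions~(5) and~(4) --- and reduce $h$ to four concave--concave $(\max,+)$ convolutions, each linear-time by the classical descending merge of increment sequences. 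Both arguments are sound and give $O(n)$; yours is more modular (it reduces to a standard primitive rather than re-proving a pointer-locality claim from scratch) and it yields the genuinely useful by-product that conditions~(2) and~(3) are not needed for Lemma~\ref{lemma:max-plus-convolution} to hold, i.e., the lemma is true under the weaker hypothesis that both parity subsamples of $g$ are concave. The paper's single-pass formulation has the mild advantage of carrying the maximizing index $i_k$ along explicitly (convenient for recovering optimal profiles rather than just optimal values), but your decomposition recovers maximizers just as easily from whichever of the two precomputed convolution entries attains the maximum. The one point you should make rigorous before calling the argument complete is the boundary bookkeeping you already flagged: the definition of $h$ uses $g(0)$ (take $g(0)=0$ as in the application) and the four convolutions live on index windows of roughly half length, so you must check that for every $k$ the two relevant windows are nonempty and jointly cover all splits $1\le i\le k$.
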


\begin{proof}
We demonstrate that:\\

\noindent{Observation 1. For any fixed $k$ ($1\leq k\leq n-1$), let $i_k$ be a maximum point of $f(i)+g(k-i)$. Then $f(i)+g(k+1-i)$ attains its maximum at a point in $\left\{i_k-1,i_k,i_k+1\right\}$.}\\

  The above observation indicates that we can compute $h(k+1)$ by $h(k)$ in $O(1)$ time. We prove it by contradiction. Let $i_{k+1}$ denote a maximum point of $f(i)+g(k-i)$, either $i_{k+1}>i_k+1$, or $i_{k+1}<i_k-1$.

  By the optimality of $i_k$ and $i_{k+1}$, we derive
  \begin{equation}\label{eq:maximal-point-of-k-(1)}
    f(i_k)+g(k-i_k)\geq f(i_{k+1}-1)+g(k-i_{k+1}+1)
  \end{equation}
  and
  \begin{equation}\label{eq:maximal-point-of-k+1-(1)}
    f(i_{k+1})+g(k+1-i_{k+1})> f(i_k+1)+g(k-i_k).
  \end{equation}
  Notice that the equation doesn't hold in (\ref{eq:maximal-point-of-k+1-(1)}) because $i_k+1$ is not a maximum point of $f(i)+g(k+1-i)$.
  
  Combining (\ref{eq:maximal-point-of-k-(1)}) and (\ref{eq:maximal-point-of-k+1-(1)}) we have
  $f(i_{k+1})-f(i_{k+1}-1)>f(i_k+1)-f(i_k).$
   By the concavity of $f$, this implies $i_{k+1}< i_k+1$, which contradicts to $i_{k+1}>i_k+1$. So we can assume $i_{k+1}<i_k-1$.
  
  By the optimality of $i_k$ and $i_{k+1}$, we derive
  \begin{equation}\label{eq:maximal-point-of-k-(2)}
    f(i_k)+g(k-i_k)\geq f(i_{k+1})+g(k-i_{k+1}),
  \end{equation}  
  and
  \begin{equation}\label{eq:maximal-point-of-k+1-(2)}
    f(i_{k+1})+g(k+1-i_{k+1})> f(i_k)+g(k+1-i_k).
  \end{equation}  
  Notice that the equation doesn't hold in (\ref{eq:maximal-point-of-k+1-(2)}) because $i_k$ is not a maximum point of $f(i)+g(k+1-i)$.
  
  Combining  (\ref{eq:maximal-point-of-k-(2)}) and (\ref{eq:maximal-point-of-k+1-(2)}) we derive
  \begin{equation}\label{eq:ineq-for-g}
    g(k+1-i_{k+1})-g(k-i_{k+1})>g(k+1-i_k)-g(k-i_k).
  \end{equation}
  
  Similarly, by the optimality of $i_k$ and $i_{k+1}$, we derive
  \begin{equation}\label{eq:maximal-point-of-k-(2)-2}
    f(i_k)+g(k-i_k)\geq f(i_{k+1}+1)+g(k-i_{k+1}-1).
  \end{equation}
  and
  \begin{equation}\label{eq:maximal-point-of-k+1-(2)-2}
    f(i_{k+1})+g(k+1-i_{k+1})> f(i_k-1)+g(k+2-i_k).
  \end{equation}
  Notice that the equation in (\ref{eq:maximal-point-of-k+1-(2)-2}) doesn't hold because $i_k-1$ is not a maximum point of $f(i)+g(k+1-i)$.
  
  Combine (\ref{eq:maximal-point-of-k-(2)-2}) and (\ref{eq:maximal-point-of-k+1-(2)-2}) together we derive
  \begin{align}\label{eq:add}
   f(i_k)-f(i_k-1)+&g(k-i_k)-g(k+2-i_k)> \nonumber \\
   &f(i_{k+1}+1)-f(i_{k+1})+g(k-i_{k+1}-1)-g(k+1-i_{k+1}).
  \end{align}
  By the concavity of $f$ and $i_{k+1}<i_k-1$, we have $f(i_k)-f(i_k-1)<f(i_{k+1}+1)-f(i_{k+1})$. Further by (\ref{eq:add}) we have
  \begin{equation}\label{eq:ineq-for-g-2}
    g(k-i_k)-g(k+2-i_k)>g(k-i_{k+1}-1)-g(k+1-i_{k+1}).
  \end{equation}
  
    We will use (\ref{eq:ineq-for-g}) and (\ref{eq:ineq-for-g-2}) to derive contradiction. For convenience, let $x=k-i_k, y=k+1-i_{k+1}$, and (\ref{eq:ineq-for-g}) can be simplified as 
    \begin{equation}\label{eq:simplified-ineq-for-g}
      g(y)-g(y-1)>g(x+1)-g(x),
    \end{equation}
    (\ref{eq:ineq-for-g-2}) can be simplified as
    \begin{equation}\label{eq:simplified-ineq-for-g-2}
      g(y)-g(y-2)>g(x+2)-g(x).
    \end{equation}
    By $i_{k+1}<i_k-1$ we know $y>x+2$. 
    \setcounter{case}{0}
    \begin{case}[$x,y$ are both even]
    By (\ref{eq:simplified-ineq-for-g-2}) and Definition~\ref{def:oscillating-concave}.1, we know $y\leq x+2$, which leads to a contradiction.
    \end{case}
    \begin{case}[$x,y$ are both odd]
    By Definition~\ref{def:oscillating-concave}.2, we have
    \begin{equation}\label{eq:case2-1}
      g(x+1)-g(x)\geq \frac{g(x+1)-g(x-1)}{2}.
    \end{equation}
    and
    \begin{equation}\label{eq:case2-2}
      \frac{g(y+1)-g(y-1)}{2}\geq g(y)-g(y-1).
    \end{equation}
    By Definition~\ref{def:oscillating-concave}.1 and $y\geq x+3$ we have
    \begin{equation}\label{eq:case2-3}
      \frac{g(x+1)-g(x-1)}{2}\geq \frac{g(y+1)-g(y-1)}{2}.
    \end{equation}
    Combine (\ref{eq:case2-1}), (\ref{eq:case2-2}) and (\ref{eq:case2-3}) we derive $g(x+1)-g(x)\geq g(y)-g(y-1)$, which contradicts to (\ref{eq:simplified-ineq-for-g}).
    \end{case}
    \begin{case}[$x$ is even, $y$ is odd]
    By Definition~\ref{def:oscillating-concave}.4 and $y\geq x+2$ we have $g(y)-g(y-1)\leq g(x+1)-g(x)$, which contradicts to (\ref{eq:simplified-ineq-for-g}).
    \end{case}
    \begin{case}[$x$ is odd, $y$ is even]
    By Definition~\ref{def:oscillating-concave}.5 and $y\geq x+2$ we have $g(y)-g(y-1)\leq g(x+1)-g(x)$, which contradicts to (\ref{eq:simplified-ineq-for-g}).
    \end{case}  \qed
\end{proof}
\subsection{Algorithm for finding optimal solutions based on (max,+) convolution}
Suppose the projects are sorted by $R_j(2)$ in descending order.
For convenience, let $a_j=R_j(1)$, and $b_j=R_j(2)-R_j(1)$.

Divide all projects into two groups $A,B$. Group $A=\{j\mid a_j>b_j\}$, and group $B=\{j\mid a_j\leq b_j\}$. The number of elements in $A$ is denoted as $|A|$, and $|B|$ analogously.

\begin{definition}
The maximal revenue of allocating $k$ efforts to group $A$ projects is denoted as $f(k)$. \\
The maximal revenue of allocating $k$ efforts to group $B$  projects is denoted as $g(k)$.
\end{definition}

The following lemma indicates how to compute $f$ and $g$.
\begin{lemma}[Calculate $\boldsymbol{f,g}$]\label{lemma:calculate-fg}
\begin{enumerate}
  \item $f(k)=$ sum of the kth largest $a_i,b_i$, where $i\in A$. 
  \item $g(k)=\left\{
            \begin{aligned}
            &\sum_{i=1}^{\frac{k}{2}}R_i(2) , k\text{ is even, } \\
            &\max\left(g(k-1)+\max_{\frac{k+3}{2}\leq i\leq |B|}a_i, g(k+1)-\min_{1\leq i\leq \frac{k+1}{2}}b_i\right) , k\text{ is odd, } \\
            \end{aligned}
            \right.
        $
  
  where $i\in B$.
\end{enumerate}   
\end{lemma}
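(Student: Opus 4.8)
The plan is to exploit the fact that the partition into $A$ and $B$ separates the projects into a \emph{concave} group and a \emph{convex} group. For $i\in A$ we have $a_i>b_i$, i.e.\ $R_i(1)-R_i(0)>R_i(2)-R_i(1)$, so each $R_i$ is concave on $\{0,1,2\}$; for $i\in B$ we have $a_i\le b_i$, so each $R_i$ is convex there. Statement~1 and the even case of Statement~2 will follow from standard facts about these two regimes, while the odd case of Statement~2 is where the real work lies.

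For Statement~1 I would argue by the classical greedy principle for separable concave maximization under a sum constraint. Regard each project $i\in A$ as offering two marginal increments, $a_i$ for its first effort and $b_i$ for its second, and observe that allocating $k$ efforts optimally amounts to choosing the $k$ largest increments subject to the constraint that the increment $b_i$ may be taken only together with $a_i$. The key observation is that this constraint is \emph{automatically} satisfied by the unconstrained top-$k$ choice: since $a_i>b_i$, whenever $b_i$ is among the $k$ largest increments so is $a_i$. Hence $f(k)$ equals the sum of the $k$ largest values in the multiset $\{a_i\}_{i\in A}\cup\{b_i\}_{i\in A}$, as claimed.

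For Statement~2 I would invoke \Cref{lemma:Only one not full}, which guarantees an optimal $k$-profile on group $B$ in which at most one project receives an intermediate allocation (here, exactly one effort). When $k$ is even, a parity argument rules out any such partial project: the remaining efforts would then sum to the odd number $k-1$ while being distributed in multiples of two, a contradiction; thus every project gets $0$ or $2$ efforts, and maximizing $\sum R_i(2)$ over a choice of $k/2$ full projects selects the top $k/2$ in the $R_\cdot(2)$-order, giving $g(k)=\sum_{i=1}^{k/2}R_i(2)$. When $k$ is odd, exactly one project $p$ receives one effort and the remaining $(k-1)/2$ full projects are the largest-$R(2)$ projects other than $p$. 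I would then enumerate over $p$ and split into two cases according to whether $p$ lies outside or inside the top $(k-1)/2$ projects: in the first case the full set is untouched and the revenue is $g(k-1)+a_p$, while in the second case excluding $p$ promotes project $(k+1)/2$ into the full set, yielding $g(k+1)-b_p$. Optimizing each case over $p$ produces the two terms of the stated maximum.

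The step I expect to be most delicate is reconciling the two index ranges at the boundary project $p=(k+1)/2$. The naive split assigns $p=(k+1)/2$ to the ``add'' term (range $p\ge(k+1)/2$), whereas the lemma places it in the ``remove'' term (range $p\le(k+1)/2$). I would resolve this by showing the two candidate values coincide there: since $g(k+1)=g(k-1)+R_{(k+1)/2}(2)$, we have $g(k+1)-b_{(k+1)/2}=g(k-1)+R_{(k+1)/2}(2)-\bigl(R_{(k+1)/2}(2)-R_{(k+1)/2}(1)\bigr)=g(k-1)+a_{(k+1)/2}$, so moving this single index from one range to the other leaves the overall maximum unchanged. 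This identity, together with the case analysis above, yields exactly the stated recurrence, and the running time is immediate once the prefix sums $\sum_{i\le t}R_i(2)$ and the running extrema of $a_\cdot,b_\cdot$ are maintained.
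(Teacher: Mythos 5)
Your proof is correct and follows essentially the same route as the paper's: the paper likewise reduces the odd case to the fact that exactly one project in $B$ receives a single effort (proved there by a direct exchange --- if two projects $i,j$ each hold one effort with $a_i\le a_j\le b_j$, moving the effort from $i$ to $j$ does not decrease revenue --- rather than by citing Lemma~\ref{lemma:Only one not full}, though the two arguments are interchangeable since group $B$ is exactly the convex group), and it treats part~1 and the even case as evident. Your extra verification that the boundary index $p=(k+1)/2$ may be moved between the two ranges because $g(k+1)-b_{(k+1)/2}=g(k-1)+a_{(k+1)/2}$ is a detail the paper omits, and you handle it correctly.
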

\begin{proof}
  1. Proof is evident.
  
  2. Proof is evident when $k$ is even.
  
  When $k$ is odd, we demonstrate that for projects in group $B$, there exists an optimal $k$-profile, such that a unique project is allocated with one effort. 
  
  We prove it by contradiction. Assume there are two projects $i,j\in B$ receiving one effort separately. Without loss of generality, suppose $a_i\leq a_j$, then we 
  have $a_i\leq a_j\leq b_j$. We can remove one effort from $i$-th project and allocate it to $j$-th project, without decreasing the total revenue.  \qed

\end{proof}

Denote the maximal revenue of allocating $k$ efforts to all projects as $h(k)$, then $h(k)$ can be written as (max,+) convolution of $f$ and $g$ as follows:
\begin{equation*}\label{eq:calculate-ANS}
  h(k)=\max_{0\leq i\leq 2|A|, 0\leq k-i\leq 2|B|}f(i)+g(k-i).
\end{equation*}

The following lemma together with Lemma~\ref{lemma:max-plus-convolution} ensure that we can compute $h(k) (1\leq k\leq 2n)$ in $O(n)$ time.
\begin{lemma}[Properties of $\boldsymbol{f,g}$]\label{lemma:properties-of-fg}
\begin{enumerate}
(1) $f(k)$ is an convex function. \\
(2) $g(k)$ is an oscillating concave function. 
\end{enumerate}
\end{lemma}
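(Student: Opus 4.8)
Part (1) is the short half and part (2) carries the real work, so the plan is to dispatch $f$ quickly and then spend the effort on $g$. For part (1) I would invoke Lemma~\ref{lemma:calculate-fg}.1, which writes $f(k)$ as the sum of the $k$ largest numbers in the multiset $\{a_i,b_i : i\in A\}$. Listing these $2|A|$ numbers in non-increasing order as $v_1\ge v_2\ge\cdots$, we get $f(k)=\sum_{t\le k}v_t$, so the increment $f(k)-f(k-1)=v_k$ is non-increasing in $k$; hence the consecutive increments of $f$ are monotonically ordered. This monotonicity of increments is precisely the curvature hypothesis that Lemma~\ref{lemma:max-plus-convolution} places on its first argument; I would note that for this reason the word ``convex'' in part~(1) must be read as ``concave'' (a slip in the statement), since a genuinely convex $f$ would violate the hypothesis of Lemma~\ref{lemma:max-plus-convolution} and the reduction to that lemma would break.

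\textbf{Setup for part (2).} Index the projects of $B$ as $1,\dots,|B|$ in non-increasing order of $R(2)$, and set $P_i=R_i(2)=a_i+b_i$. By Lemma~\ref{lemma:calculate-fg}.2 we have $g(2k)=\sum_{i=1}^{k}P_i$, and instantiating the odd case at argument $2k+1$ gives $g(2k+1)=\max\bigl(g(2k)+\alpha_k,\ g(2k+2)-\beta_k\bigr)$, where $\alpha_k=\max_{k+2\le i\le|B|}a_i$ and $\beta_k=\min_{1\le i\le k+1}b_i$. Since $g(2k+2)=g(2k)+P_{k+1}$, the even-to-odd and odd-to-even increments acquire the clean forms $U_k:=g(2k+1)-g(2k)=\max(\alpha_k,\,P_{k+1}-\beta_k)$ and $V_k:=g(2k+2)-g(2k+1)=P_{k+1}-U_k=\min(P_{k+1}-\alpha_k,\,\beta_k)$. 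The three facts I would lean on are: $P_i$ is non-increasing (the sort); $\alpha_k$ is non-increasing (its index set shrinks) while $\beta_k$ is non-increasing (its index set grows); and, crucially, the group-$B$ defining relation $a_i\le b_i$ rewrites as the sandwich $a_i\le P_i/2\le b_i$.

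\textbf{Reduction and easy clauses.} I would translate the five clauses of Definition~\ref{def:oscillating-concave} into statements about $U_k,V_k,P_k$: clause (1) is ``$P_k$ non-increasing'' (immediate from the sort); clause (2) is $V_k\ge U_k$, equivalently $U_k\le P_{k+1}/2$; clause (3) is $U_k\le V_{k-1}$; clause (4) is ``$U_k$ non-increasing''; clause (5) is ``$V_k$ non-increasing''. Clause (2) follows directly from the sandwich and the monotonicity of $P$: the index attaining $\alpha_k$ lies beyond $k+1$, so $\alpha_k\le P_{k+2}/2\le P_{k+1}/2$, while the index attaining $\beta_k$ lies in the prefix, so $\beta_k\ge P_{k+1}/2$; hence both branches of the $\max$ defining $U_k$ are at most $P_{k+1}/2$, and as a by-product $\alpha_k\le\beta_k$. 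Clauses (3)–(5) I would attack with the same sandwich together with a case split on which branch of the defining $\max$ (resp.\ $\min$) is active.

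\textbf{Main obstacle.} The routine part is clause (2); the genuine difficulty is clauses (4) and (5), because $U_k$ and $V_k$ are a $\max$/$\min$ of a decreasing term ($\alpha_k$, resp.\ $\beta_k$) against a \emph{mixed} term $P_{k+1}-\beta_k$ (resp.\ $P_{k+1}-\alpha_k$) whose monotonicity is not self-evident, being a decreasing quantity minus a decreasing quantity. My plan there is not to argue the mixed term is monotone in isolation, but to bound each of the two candidate branches of $U_{k+1}$ (resp.\ $V_{k+1}$) above by $U_k$ (resp.\ by $V_k$), again through the sandwich $a_i\le P_i/2\le b_i$ and the monotonicity of $P,\alpha,\beta$; the cross comparisons $U_k\le V_{k-1}$ and $V_k\ge U_k$ would be handled in the same spirit. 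I expect the careful bookkeeping of the index ranges in these branch-by-branch comparisons to be the only genuinely delicate point.
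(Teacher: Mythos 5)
Your proposal is correct and follows essentially the same route as the paper: it derives the increments of $g$ from the explicit formulas of Lemma~\ref{lemma:calculate-fg}, and establishes the five clauses of Definition~\ref{def:oscillating-concave} from the descending sort of $R_i(2)$, the monotonicity of the running $\max$/$\min$ terms, and the group-$B$ relation $a_i\le b_i$ (your ``sandwich''), with the same branch-by-branch case analysis for clauses (4) and (5) that the paper uses; I checked that your plan of bounding each branch of $U_k$ (resp.\ $V_k$) by $U_{k-1}$ (resp.\ $V_{k-1}$) goes through. Your remark that ``convex'' in part (1) must be read as ``concave'' is also correct, since $f(k)-f(k-1)$ is non-increasing and Lemma~\ref{lemma:max-plus-convolution} requires its first argument to be concave.
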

\begin{proof}
1. Proof is evident by Lemma~\ref{lemma:calculate-fg}.1.

2. By Lemma~\ref{lemma:calculate-fg}.2., $g(2k)$ is concave. We only need to prove $g$ satisfies oscillating concave property (2),(3),(4) and (5) in Definition~\ref{def:oscillating-concave}.

\bigskip

Prove Property (2):

It's equivalent to proving $g(2k+2)+g(2k)\geq 2g(2k+1)$. By Lemma~\ref{lemma:calculate-fg}.2 we know
$$g(2k+2)+g(2k)=\sum_{i=1}^{k+1}(a_i+b_i)+\sum_{i=1}^{k}(a_i+b_i),$$
and
$$2g(2k+1)=2\max{\left(\sum_{i=1}^{k}(a_i+b_i)+\max_{k+2\leq i\leq |B|}a_i, \sum_{i=1}^{k+1}(a_i+b_i)-\min_{1\leq i\leq k+1}b_i\right)},$$
where $a_i\leq b_i$.

It reduces to prove 
\begin{equation}\label{eq:eq1}
  \sum_{i=1}^{k+1}(a_i+b_i)+\sum_{i=1}^{k}(a_i+b_i)\geq 2\left(\sum_{i=1}^{k}(a_i+b_i)+\max_{k+2\leq i\leq |B|}a_i\right),
\end{equation}
and
\begin{equation}\label{eq:eq2}
  \sum_{i=1}^{k+1}(a_i+b_i)+\sum_{i=1}^{k}(a_i+b_i)\geq 2\left(\sum_{i=1}^{k+1}(a_i+b_i)-\min_{1\leq i\leq k+1}b_i\right).
\end{equation}
First we prove (\ref{eq:eq1}). It can be simplified as 
$$a_{k+1}+b_{k+1}\geq 2\max_{k+2\leq i\leq |B|}a_i.$$

Let $a_{i_0}=\max_{k+2\leq i\leq |B|}a_i$, we know $a_{k+1}+b_{k+1}\geq a_{i_0}+b_{i_0}\geq a_{i_0}+a_{i_0}$. Therefore $a_{k+1}+b_{k+1}\geq 2a_{i_0}$.

Next we prove (\ref{eq:eq2}). It can be simplified as
$$2\min_{1\leq i\leq k+1}b_i\geq a_{k+1}+b_{k+1}.$$
Let $b_{i_1}=\min_{1\leq i\leq k+1}b_i$, we know $a_{k+1}+b_{k+1}\leq a_{i_1}+b_{i_1}\leq b_{i_1}+b_{i_1}$. Therefore $2b_{i_1}\geq a_{k+1}+b_{k+1}$.

\bigskip

Prove Property (3):

By Lemma~\ref{lemma:properties-of-fg}.2 we have
$$g(2k+2)-g(2k)\leq g(2k)-g(2k-2),$$
by Lemma~\ref{lemma:properties-of-fg}.3 we have
$$g(2k+1)-g(2k)\leq \frac{g(2k+2)-g(2k)}{2},$$
and
$$\frac{g(2k)-g(2k)-2}{2}\leq g(2k)-g(2k-1).$$
Combine the above three inequalities we can get $g(2k+1)-g(2k)\leq g(2k)-g(2k-1)$.

\bigskip

Prove Property (4):

Formally we need to prove
$$g(2k-1)-g(2k-2)\geq g(2k+1)-g(2k).$$ 
By Lemma~\ref{lemma:calculate-fg}.2, we have
\begin{equation*}
  g(2k-1)-g(2k-2) = \max\left( \max_{k+1\leq i\leq |B|}a_i, a_{k}+b_{k}-\min_{1\leq i\leq k}b_i \right),
\end{equation*}
and 
\begin{equation}\label{eq:5-2}
  g(2k+1)-g(2k) = \max\left( \max_{k+2\leq i\leq |B|}a_i, a_{k+1}+b_{k+1}-\min_{1\leq i\leq k+1}b_i \right).
\end{equation}

If $\min_{1\leq i\leq k+1}b_i\ne b_{k+1}$, then $\min_{1\leq i\leq k+1}b_i=\min_{1\leq i\leq k}b_i$. By $\max_{k+1\leq i\leq |B|}a_i\geq \max_{k+2\leq i\leq |B|}a_i$, 
and $a_k+b_k\geq a_{k+1}+b_{k+1}$, we know $g(2k-1)-g(2k-2)\geq g(2k+1)-g(2k)$.

Otherwise, $\min_{1\leq i\leq k+1}b_i= b_{k+1}$. Then (\ref{eq:5-2}) can be simplified as
\begin{align*}
  g(2k+1)-g(2k) &= \max\left(\max_{k+2\leq i\leq |B|}a_i, a_{k+1}\right), \\
   &= \max_{k+1\leq i\leq |B|}a_i. 
\end{align*} 
So $g(2k-1)-g(2k-2)\geq g(2k+1)-g(2k)$. 

\smallskip

Prove Property (5):

Formally we need to prove 
\begin{equation*}
  g(2k)-g(2k-1)\geq g(2k+2)-g(2k+1).
\end{equation*}
By Lemma~\ref{lemma:calculate-fg}.2, we have
\begin{equation*}
  g(2k)-g(2k-1)=\min\left(a_k+b_k-\max_{k+1\leq i\leq |B|}a_i, \min_{1\leq i\leq k}b_i\right).
\end{equation*}
and
\begin{equation}\label{eq:6-2}
  g(2k+2)-g(2k+1)=\min\left(a_{k+1}+b_{k+1}-\max_{k+2\leq i\leq |B|}a_i, \min_{1\leq i\leq k+1}b_i\right).
\end{equation}

If $\max_{k+1\leq i\leq |B|}a_i\ne a_{k+1}$, then $\max_{k+1\leq i\leq |B|}a_i=\max_{k+2\leq i\leq |B|}a_i$. By $a_k+b_k\geq a_{k+1}+b_{k+1}$, and $\min_{1\leq i \leq 
k}b_i\geq \min_{1\leq i\leq k+1}b_i$, we know $g(k)-g(2k-1)\geq g(2k+2)-g(2k+1)$.

Otherwise, $\max_{k+1\leq i\leq |B|}a_i= a_{k+1}$, then $a_{k+1}\geq \max_{k+2\leq i\leq |B|}a_i$. So 
\begin{align*}
a_{k+1}+b_{k+1}-\max_{k+2\leq i\leq |B|}a_i &\geq b_{k+1}, \\
    &\geq \min_{1\leq i\leq k+1}b_i.
\end{align*}

So (\ref{eq:6-2}) can be simplified as
\begin{equation*}
  g(2k+2)-g(2k+1)=\min_{1\leq i\leq k+1}b_i.
\end{equation*}

Therefore $g(2k)-g(2k-1)\geq \min_{1\leq i\leq k}b_i\geq \min_{1\leq i\leq k+1}b_i=g(2k+2)-g(2k+1)$.   \qed
\end{proof}
\section{Summary}

We revisit the classic resource allocation problem with a separable objective function under a single linear constraint. 
A regret-enabled greedy algorithm is designed that achieves $O(n\log n)$ time for $m=O(1)$, 
  outperforming dynamic programming algorithm for small $m$. 
The new algorithm is practical especially for very small $m$,
  and its analysis is not over complicated (see Lemma~\ref{lemma:diff-irreducible}).

For the special case where 
   all the separated objective function $R_j$ are convex, we present fast algorithms 
  that cost $O(nm + n\log n)$ time (for all $k$) or $O(nm)$ time (for one given $k$).
For the special case where $m=2$,
  we show that the main algorithm only costs linear time $O(mn)=O(n)$, after a sorting process that costs $O(n \log n)$ time.
  It arises an open question what is the lower bound for this allocation problem (for $m=2$ or $m=O(1)$).
  
A more interesting open question (suggested by one reviewer) is that 
     can we solve this resource allocation problem in time $O(n\log n \cdot \mathsf{poly}(m))$ or even $O(n\log n + \mathsf{poly}(m))$?

\bibliographystyle{unsrt}
\bibliography{RGA}

\begin{thebibliography}{10}

\bibitem{Osman}
MS~Osman, Mahmoud~A Abo-Sinna, and AA~Mousa.
\newblock An effective genetic algorithm approach to multiobjective resource
  allocation problems (moraps).
\newblock {\em Applied Mathematics and Computation}, 163(2):755--768, 2005.

\bibitem{Bitran}
Gabriel~R Bitran and Devanath Tirupati.
\newblock Tradeoff curves, targeting and balancing in manufacturing queueing
  networks.
\newblock {\em Operations Research}, 37(4):547--564, 1989.

\bibitem{Bitran2}
GR~Bitran and D~Sarkar.
\newblock Targeting problems in manufacturing queueing networks-an iterative
  scheme and convergence.
\newblock {\em European Journal of Operational Research}, 76(3):501--510, 1994.

\bibitem{Rajkumar}
Ragunathan Rajkumar, Chen Lee, John Lehoczky, and Dan Siewiorek.
\newblock A resource allocation model for qos management.
\newblock In {\em Proceedings Real-Time Systems Symposium}, pages 298--307.
  IEEE, 1997.

\bibitem{Bretthauer2}
Kurt~M Bretthauer, Anthony Ross, and Bala Shetty.
\newblock Nonlinear integer programming for optimal allocation in stratified
  sampling.
\newblock {\em European Journal of Operational Research}, 116(3):667--680,
  1999.

\bibitem{Katoh}
Naoki Katoh and Toshihide Ibaraki.
\newblock Resource allocation problems.
\newblock {\em Handbook of Combinatorial Optimization: Volume1--3}, pages
  905--1006, 1998.

\bibitem{Fox}
Bennett Fox.
\newblock Discrete optimization via marginal analysis.
\newblock {\em Management science}, 13(3):210--216, 1966.

\bibitem{Shih}
Wei Shih.
\newblock A new application of incremental analysis in resource allocations.
\newblock {\em Journal of the Operational Research Society}, 25(4):587--597,
  1974.

\bibitem{Federgruen}
Awi Federgruen and Henri Groenevelt.
\newblock The greedy procedure for resource allocation problems: Necessary and
  sufficient conditions for optimality.
\newblock {\em Operations research}, 34(6):909--918, 1986.

\bibitem{Murota}
Kazuo Murota.
\newblock Discrete convex analysis.
\newblock {\em Mathematical Programming}, 83:313--371, 1998.

\bibitem{Shioura}
Akiyoshi Shioura.
\newblock Minimization of an m-convex function.
\newblock {\em Discrete Applied Mathematics}, 84(1-3):215--220, 1998.

\bibitem{Bretthauder}
Kurt~M Bretthauer and Bala Shetty.
\newblock The nonlinear resource allocation problem.
\newblock {\em Operations research}, 43(4):670--683, 1995.

\bibitem{Devanur}
Nikhil~R Devanur, Kamal Jain, Balasubramanian Sivan, and Christopher~A Wilkens.
\newblock Near optimal online algorithms and fast approximation algorithms for
  resource allocation problems.
\newblock In {\em Proceedings of the 12th ACM conference on Electronic
  commerce}, pages 29--38, 2011.

\bibitem{Deng}
Zhenhua Deng, Shu Liang, and Yiguang Hong.
\newblock Distributed continuous-time algorithms for resource allocation
  problems over weight-balanced digraphs.
\newblock {\em IEEE transactions on cybernetics}, 48(11):3116--3125, 2017.

\bibitem{Cygan}
Marek Cygan, Marcin Mucha, Karol W{\k{e}}grzycki, and Micha{\l} W{\l}odarczyk.
\newblock On problems equivalent to (min,+)-convolution.
\newblock {\em ACM Transactions on Algorithms (TALG)}, 15(1):1--25, 2019.

\bibitem{Bremner}
David Bremner, Timothy~M Chan, Erik~D Demaine, Jeff Erickson, Ferran Hurtado,
  John Iacono, Stefan Langerman, and Perouz Taslakian.
\newblock Necklaces, convolutions, and x+ y.
\newblock In {\em Algorithms--ESA 2006: 14th Annual European Symposium, Zurich,
  Switzerland, September 11-13, 2006. Proceedings 14}, pages 160--171.
  Springer, 2006.

\bibitem{Bussieck}
Michael Bussieck, Hannes Hassler, Gerhard~J Woeginger, and Uwe~T Zimmermann.
\newblock Fast algorithms for the maximum convolution problem.
\newblock {\em Operations research letters}, 15(3):133--141, 1994.

\bibitem{Chan}
Timothy~M Chan and Moshe Lewenstein.
\newblock Clustered integer 3sum via additive combinatorics.
\newblock In {\em Proceedings of the forty-seventh annual ACM symposium on
  Theory of computing}, pages 31--40, 2015.

\bibitem{Chi}
Shucheng Chi, Ran Duan, Tianle Xie, and Tianyi Zhang.
\newblock Faster min-plus product for monotone instances.
\newblock In {\em Proceedings of the 54th Annual ACM SIGACT Symposium on Theory
  of Computing}, pages 1529--1542, 2022.

\bibitem{Bringmann}
Karl Bringmann and Alejandro Cassis.
\newblock Faster 0-1-knapsack via near-convex min-plus-convolution.
\newblock {\em arXiv preprint arXiv:2305.01593}, 2023.

\bibitem{Knapsack1}
David Pisinger.
\newblock Linear time algorithms for knapsack problems with bounded weights.
\newblock {\em J. Algorithms}, 33(1):1--14, 1999.

\bibitem{Bringmann-BKP}
Karl Bringmann.
\newblock Knapsack with small items in near-quadratic time.
\newblock In {\em Proceedings of the 56th Annual ACM Symposium on Theory of
  Computing}, pages 259--270, 2024.

\end{thebibliography}

\clearpage

\end{document}